\newtheorem{theorem}{Theorem}
\newtheorem{lemma}[theorem]{Lemma}
\newcommand{\PP}{{\mathbb P}}
\newcommand{\EE}{{\mathbb E}}
\begin{document}

\title{A consistency lemma in statistical phylogenetics}
\author{Mike Steel}
\address{Biomathematics Research Centre, University of Canterbury, Christchurch, New Zealand.}
\email{mike.steel@canterbury.ac.nz}
\date{January 26, 2015}

\begin{abstract}
This short note provides a simple formal proof of a folklore result in statistical phylogenetics concerning the convergence of bootstrap support for a tree and its edges.
\end{abstract}

\maketitle

\section{Definitions and preliminaries}
In this note $T$ will refer to any rooted or unrooted phylogenetic tree, and $T^{-\rho}$ will refer to the unrooted tree
obtained from $T$ by suppressing the root vertex $\rho$ if it has one (i.e.  if $T$ is unrooted then $T^{-\rho} = T$).   Let $\theta$ be a vector of continuous parameters -- including the branch lengths of $T$, along with possibly other continuous parameters required to specify a model of character evolution on $T$.  Let $\Theta$ denote the set of values $\theta$ may take. Branch lengths, in particular, are assumed to be strictly positive and finite; and  in general $\Theta$ will be some open subset of Euclidean space. 
Consider any stochastic process (e.g. Markov process, or mixture of Markov processes) which assigns to each pair $(T, \theta)$ 
a probability distribution ${\bf s} ={\bf s}(T ,\theta)$ on discrete, finite-state characters at the tips of the tree.  We assume throughout that the map $\theta \mapsto {\bf s}(T ,\theta)$ is continuous. Such models are central to statistical phylogenetics and methods for reconstructing phylogenetic trees from aligned genetic (e.g. DNA) sequences. 
 A {\em tree reconstruction method}  $\psi$ is any method that reconstructs a set of one or more  unrooted phylogenetic trees from any given distribution $\hat{\bf f}$ of site pattern frequencies.
Suppose we generate $k$ sites i.i.d. from $(T, \theta)$, and let $\hat{\bf s}$ be the random variable equal to the resulting proportion of  site patterns (character types). The method $\psi$ is a {\em statistically consistent} estimator of the unrooted topology of $T$  if the
probability that $\psi(\hat{\bf s}) = \{T^{-\rho}\}$ converges to 1 as $k \rightarrow \infty$\footnote{There is a slightly stronger definition involving almost sure convergence rather than convergence in probability, and the results here can be extended to that setting also.}. Suppose that $\psi$ satisfies the following condition:
\begin{itemize}
\item[(*)]For every tree $T$ for which $T^{-\rho}$ is fully-resolved (i.e. binary), and each $\theta \in \Theta(T)$ a value $\epsilon = \epsilon_{(T, \theta)}>0$ exists for which the following inequality holds for every probability distribution $\hat{{\bf f}}$ on site patterns:
$\|\hat{\bf f} - {\bf s}(T, \theta)\|< \epsilon \Rightarrow \psi(\hat{\bf f}) =\{T^{-\rho}\}.$
\end{itemize}
Here $\| \cdot \|$ denotes any of the usual norms in Euclidean space.
Condition (*) implies the statistical consistency of $\psi$ for inferring $T^{-\rho}$ since the i.i.d. assumption ensures that  $\hat{\bf s}$ converges in probability to ${\bf s}(T,\theta)$ as $k$ grows, and so:
$$\PP(\psi(\hat{\bf s}) = \{T^{-\rho}\}) \geq \PP(\|\hat{\bf s} - {\bf s}(T, \theta))\|< \epsilon_{(T, \theta)}) \rightarrow 1, \mbox{ as $k \rightarrow \infty$}.$$
Not only does  condition (*) imply that $\psi({\bf s}(T, \theta)) = \{T^{-\rho}\}$  whenever $T^{-\rho}$ is fully-resolved but (*) also implies the stronger condition that 
for any tree $T'$ that has a different unrooted topology (fully-resolved or non-fully-resolved) from the fully-resolved tree $T$ we have:
\begin{equation}
\label{touching}
\inf_{\theta' \in \Theta(T')}\|{\bf s}(T, \theta)-{\bf s}(T', \theta')\|>0,
\end{equation}
a strong `identifiablity' condition,  referred to as `no touching' in \cite{sin}.

Condition (*) is a type of local stability condition.  It applies, for example, to distance-based tree reconstruction applied to (statistically consistent) `corrected distances' derived from the characters,
provided that the distance-reconstruction method has a positive `safety radius', which holds for many (but not all) distance-based methods, including the popular Neighbor-Joining method \cite{att}. Condition (*) also applies to MLE (maximum likelihood estimation) for models which satisfy (\ref{touching}) -- such models include the general time-reversible (GTR) Markov processes and its submodels (e.g. Jukes-Cantor type models) and certain extensions of these models. Here MLE treats $\theta$ as `nuisance parameters' to be optimized
as part of the search for the MLE tree;  given a vector $\hat{{\bf f}}$ as input, MLE selects the tree(s) $T'$ maximizing
$\sup_{\theta \in \Theta(T')} \PP(\hat{\bf f}|{\bf s}(T',\theta)).$
The proof that Condition (*) holds for models satisfying (\ref{touching}) follows from standard analytic arguments based on the continuity of the map $\theta \mapsto \PP(\hat{{\bf f}}|{\bf s}(T', \theta))$ (see e.g. \cite{cha} or \cite{sin}).

\section{Result}
Given $\hat{\bf s}$ derived from $k$ i.i.d. site patterns, let $\hat{\bf s}^*$ denote the frequency of site patterns obtained by taking an i.i.d. sample of $k$ site patterns using probability distribution $\hat{\bf s}$. Thus $\hat{\bf s}^*$ is the distribution of site patterns in a  bootstrap sample from the original data. The {\em bootstrap support of an edge $e$} of an unrooted phylogenetic tree $T'$,  is the expected proportion of such bootstrap samples for which a tree, sampled uniformly at random from  $\psi(\hat{\bf s}^*)$, has an edge that induces the same split of the leaf taxa as $e$ does in $T'$ (it is a random variable by its dependence on ${\hat{\bf s}}$, and since $\psi$ can return more than one tree).  The {\em bootstrap support for $T'$} is the random variable $\PP(\psi(\hat{\bf s}^*) = \{T'\}|\hat{\bf s})$, the expected proportion of bootstrap samples for which $\psi$ returns the single tree $T'$. The following result was motivated by a question from T. Warnow (pers. comm.). 
\begin{lemma}
Suppose $k$ sites are generated i.i.d. by ${\bf s}(T, \theta)$. Under the sufficient condition (*)  for statistical consistency, the bootstrap support of every edge $e$ of $T^{-\rho}$ converges in probability to 1 as $k \rightarrow \infty$.  Moreover, the bootstrap support for $T^{-\rho}$  converges in probability to 1 as $k\rightarrow \infty$. 
\end{lemma}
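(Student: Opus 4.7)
The plan is to prove the tree-level statement first; the edge-level statement follows for free, since on any bootstrap outcome with $\psi(\hat{\bf s}^*) = \{T^{-\rho}\}$ the unique tree returned by $\psi$ displays every edge of $T^{-\rho}$, so the bootstrap support of each such edge is bounded below by the bootstrap support of $T^{-\rho}$ itself.

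For the tree statement, I would split the $\epsilon$ of condition (*) in half via the triangle inequality. Fix $\epsilon = \epsilon_{(T,\theta)} > 0$ and define the good event $A_k := \{\|\hat{\bf s} - {\bf s}(T,\theta)\| < \epsilon/2\}$. The coordinate-wise weak law applied to the original i.i.d.\ sites gives $\PP(A_k) \to 1$, exactly as in the consistency argument already spelled out in the excerpt. On $A_k$, any bootstrap vector $\hat{\bf s}^*$ with $\|\hat{\bf s}^* - \hat{\bf s}\| < \epsilon/2$ also has $\|\hat{\bf s}^* - {\bf s}(T,\theta)\| < \epsilon$, whence (*) forces $\psi(\hat{\bf s}^*) = \{T^{-\rho}\}$. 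Writing $B_k := \PP(\psi(\hat{\bf s}^*) = \{T^{-\rho}\} \mid \hat{\bf s})$ for the bootstrap support, we therefore obtain $B_k \geq \PP(\|\hat{\bf s}^* - \hat{\bf s}\| < \epsilon/2 \mid \hat{\bf s})$ on $A_k$.

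The remaining step is a Chebyshev bound on the inner conditional probability. Conditional on $\hat{\bf s}$, the vector $k\hat{\bf s}^*$ is multinomial with parameters $k$ and $\hat{\bf s}$, so the conditional variance of each coordinate is at most $1/(4k)$, and the number of site patterns is a fixed finite constant (depending only on the state space and the number of leaves). Hence $\EE[\|\hat{\bf s}^* - \hat{\bf s}\|^2 \mid \hat{\bf s}] \leq C/k$ for some constant $C$ independent of $\hat{\bf s}$, and Chebyshev yields $B_k \geq 1 - 4C/(k\epsilon^2)$ on $A_k$. For any $\delta > 0$ this lower bound exceeds $1 - \delta$ once $k$ is sufficiently large, so $\PP(B_k > 1 - \delta) \geq \PP(A_k) \to 1$, establishing convergence in probability of $B_k$ to $1$. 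There is no substantial obstacle; the only point requiring a moment's thought is that the Chebyshev step must be uniform in the random resampling distribution $\hat{\bf s}$, which is automatic since the multinomial variances $\hat{s}_i(1-\hat{s}_i)/k$ are bounded by $1/(4k)$ regardless of $\hat{\bf s}$.
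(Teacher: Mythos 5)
Your proof is correct, and it takes a genuinely different route from the paper. The paper's argument is soft: it sets $X = \mathbf{1}\{\psi(\hat{\bf s}^*) = \{T^{-\rho}\}\}$ and $Y = \EE[X\,|\,\hat{\bf s}]$, uses the tower property to get $\EE[Y] = \PP(\psi(\hat{\bf s}^*) = \{T^{-\rho}\})$, observes that $\hat{\bf s}^* \xrightarrow{p} {\bf s}(T,\theta)$ (composing $\hat{\bf s} \xrightarrow{p} {\bf s}$ with $\hat{\bf s}^* - \hat{\bf s} \xrightarrow{p} {\bf 0}$) so that (*) forces $\EE[Y] \rightarrow 1$, and then concludes with the elementary fact that a $[0,1]$-valued random variable whose mean tends to $1$ converges in probability to $1$ (Markov's inequality applied to $1-Y$). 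You instead argue quantitatively: the triangle-inequality split of $\epsilon_{(T,\theta)}$ into two halves, the good event $A_k = \{\|\hat{\bf s} - {\bf s}(T,\theta)\| < \epsilon/2\}$, and a conditional Chebyshev bound for the multinomial bootstrap that is uniform in the resampling distribution $\hat{\bf s}$ -- and you correctly flag that this uniformity is the one delicate point, settled by $\hat{s}_i(1-\hat{s}_i)/k \leq 1/(4k)$ over a fixed finite pattern alphabet (with equivalence of norms covering whatever norm is in force). What each approach buys: the paper's tower-property route avoids all variance computations and never has to confront uniformity in $\hat{\bf s}$, since conditioning is integrated out before any limit is taken; your route is longer but yields strictly more, namely an explicit nonasymptotic rate, $B_k \geq 1 - 4C/(k\epsilon^2)$ deterministically on $A_k$ with $\PP(A_k) \rightarrow 1$, which also makes rigorous and quantitative the step the paper compresses into the assertion $\hat{\bf s}^* - \hat{\bf s} \xrightarrow{p} {\bf 0}$. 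Your reduction of the edge statement to the tree statement is identical to the paper's.
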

\begin{proof}
Clearly it suffices to prove the second assertion in the lemma, since, by definition,
the bootstrap support for any edge $e$ of $T^{-\rho}$ is at least $\PP(\psi(\hat{\bf s}^*) = \{T^{-\rho}\}|\hat{{\bf s}}).$
Let $X= X(\hat{\bf s})$ be the 0/1 random variable which takes the value 1 precisely if $\psi(\hat{\bf s}^*)= \{T^{-\rho}\}$, and which is 0 otherwise. 
Let $Y$ denote the expected bootstrap support for $T^{-\rho}$ given $\hat{{\bf s}}$; thus  $Y= \PP(\psi(\hat{\bf s}^*) = \{T^{-\rho}\}|\hat{\bf s}) = \EE[X|\hat{\bf s}]$ (i.e. the conditional expectation of $X$ given $\hat{\bf s}$). Notice that:
\begin{equation}
\label{ee}
\EE[Y] = \EE[E[X|\hat{\bf s}]] = \EE[X] = \PP(\psi(\hat{\bf s}^*) = \{T^{-\rho}\}).
\end{equation}
Now, as $k$ grows, $\hat{\bf s}\xrightarrow{p} {\bf s}$, and $\hat{\bf s}^*- \hat{\bf s} \xrightarrow{p} {\bf 0}$; thus
$\hat{\bf s}^*  \xrightarrow{p} {\bf s}$.  Consequently, by Condition (*),  $\PP(\psi(\hat{\bf s}^*) = \{T^{-\rho}\})$ converges to 1 as $k \rightarrow \infty$, and so, by (\ref{ee}), $\lim_{k \rightarrow \infty} \EE[Y] = 1$.
Finally, since $Y$ takes values in the interval $[0,1]$, and the expected value of $Y$ converges to 1 as $k \rightarrow \infty$, it follows that (for the bootstrap support for $T^{-\rho}$) we have $Y\xrightarrow{p}  1$ as $k \rightarrow \infty$, as required.
\end{proof}
Note that the empirical bootstrap support for an edge (or for a  tree)  given $\hat{\bf s}$,   converges in probability to the (expected) bootstrap support value defined here, as the number $N$ of independent bootstrap replicates  becomes large; hence our results are also relevant for empirical bootstrap support for large $N$.

\end{document}